\def\1{\mathbf{1}}
\newtheorem{theorem}{Theorem}[section]
\title{Polynomial chaos based uncertainty quantification in Hamiltonian, multi-time scale, and chaotic systems}
\author{Jos{\'e} Miguel Pasini$^\dagger$ \and Tuhin Sahai$^\dagger$}
\date{$^\dagger$United Technologies Research Center \\ 411 Silver Lane, East Hartford, CT 06108, USA \\ \{pasinijm, sahait\}@utrc.utc.com}
\begin{document}

\maketitle

\begin{abstract}
Polynomial chaos is a powerful technique for propagating uncertainty through ordinary and partial differential equations. Random variables are expanded in terms of orthogonal polynomials and differential equations are derived for the expansion coefficients. Here we study the structure and dynamics of these differential equations when the original system has Hamiltonian structure, has multiple time scales, or displays chaotic dynamics. In particular, we prove that the differential equations for the expansion coefficients in generalized polynomial chaos expansions of Hamiltonian systems retain the Hamiltonian structure relative to the ensemble average Hamiltonian. We connect this with the volume-preserving property of Hamiltonian flows to show that, for an oscillator with uncertain frequency, a finite expansion must fail at long times, regardless of the order of the expansion. Also, using a two-time scale forced nonlinear oscillator, we show that a polynomial chaos expansion of the time-averaged equations captures uncertainty in the slow evolution of the Poincar{\'e} section of the system and that, as the time scale separation increases, the computational advantage of this procedure increases. Finally, using the forced Duffing oscillator as an example, we demonstrate that when the original dynamical system displays chaotic dynamics, the resulting dynamical system from polynomial chaos also displays chaotic dynamics, limiting its applicability.
\end{abstract}

\section{Introduction}

Uncertainty quantification techniques allow one to quantify output variability in the presence of parametric uncertainty. Typically, the moments of the output distributions are computed using sampling methods such as Monte Carlo~\cite{McQMc}, Quasi-Monte Carlo~\cite{QMC}, and importance sampling~\cite{Importance_Sampling}. Non-sampling approaches include response surface~\cite{Response1:book,Response2:book} and polynomial chaos based methods~\cite{Wiener}. Depending on the problem, different methods are applicable/appropriate in different scenarios. Polynomial chaos based techniques for propagating uncertainty have been used on a multitude of applications such as aeroelastic modeling~\cite{Allen2009}, transport in heterogeneous media~\cite{Ghanem1998}, Ising models~\cite{TuhinPoly}, switching systems~\cite{hybrid_uq}, combustion~\cite{Najm2009}, fluid flow~\cite{Xiu2003}, and materials models~\cite{Cit:Materials}, to name a few.

Here we study the properties and utility of using polynomial chaos expansions to propagate uncertainty through systems that have either Hamiltonian structure, multiple scales, or display chaos. We point out that polynomial chaos~\cite{Wiener} and chaos theory~\cite{Cit:Stro} are unrelated areas. Originally proposed by Nobert Wiener~\cite{Wiener} in 1938 (prior to the development of chaos theory---hence the unfortunate usage of the term \emph{chaos}), polynomial chaos expansions are a popular method for propagating uncertainty through low dimensional systems with smooth dynamics. They rely on expanding random variables in terms of orthogonal basis functions~\cite{BeyondWienerAskey}. Note that the orthogonal polynomials are chosen such that they are orthogonal to one another with respect to the prior distribution on the uncertain parameters~\cite{BeyondWienerAskey}. For example, if the underlying distribution on the uncertain parameters is Gaussian, then the associated orthogonal polynomials are Hermite polynomials~\cite{Orthopoly:book}. Similarly, if the underlying prior distribution is uniform, the associated orthogonal polynomials are Legendre~\cite{Orthopoly:book}. In general, one can construct orthogonal polynomials for arbitrary distributions~\cite{BeyondWienerAskey}. The advantage of polynomial chaos based techniques is that they provide exponential convergence for smooth processes with finite variance~\cite{cameron1947orthogonal}.

Chaos, on the other hand, refers to ``aperiodic long-term behavior in \textit{deterministic} systems that exhibits sensitive dependence on initial conditions''~\cite{Cit:Stro}. Chaos theory has been applied to a wide variety of applications such as fluid turbulence~\cite{Cit:turb}, celestial dynamics~\cite{Cit:solar}, and weather modeling~\cite{Cit:Lorenz}. It is important to point out that although the dynamics has sensitive dependence on initial conditions, it is inherently deterministic. In other words, no associated parametric uncertainty is required to observe chaos.

In this work, we present three new results. In the first part, we show that the dynamical systems that one gets by applying polynomial chaos expansions to Hamiltonian systems with uncertain parameters are also Hamiltonian. To do this, we first perform a polynomial chaos expansion of the generalized coordinates and conjugate momenta and find the evolution equations for the coefficients. We consider the expansion coefficients of the generalized coordinates as a new, larger set of ``uncertain'' generalized coordinates. By considering the averaged Hamiltonian (over parameter space) as a function of the expansion coefficients, we show that, for each of these new generalized coordinates, the coefficient in the expansion of the corresponding conjugate momentum and to the corresponding order is itself the conjugate momentum relative to the average Hamiltonian, thus demonstrating that the Hamiltonian structure in the derived differential equations is preserved. We then connect this result with the volume-preserving property of Hamiltonian flows to show that any finite polynomial chaos expansion of a harmonic oscillator with uncertain frequency must fail at long times, regardless of how many terms are kept. In the second part,  we demonstrate the application of polynomial chaos to systems with multiple time scales using perturbation theory~\cite{Cit:Rand-notes}. We demonstrate how the uncertain parameters influence the averaged dynamics of the dynamical system. In particular, we show that uncertainty can be propagated through the averaged equations instead of through the original equations, thus avoiding the computational burden of simulating a stiff system. In the third part, we apply polynomial chaos to a chaotic dynamical system (forced Duffing oscillator)~\cite{Gucken:book} and demonstrate that the resulting equations for the coefficients are also chaotic. We then show that chaotic dynamics significantly reduce the efficacy of polynomial chaos expansions at propagating uncertainty.

\section{Introduction to polynomial chaos}

Starting with a complete probability space $\Gamma$ given by
$(\Omega, \mathcal{F},\mathbb{P})$, where $\Omega$ is the sample
space, $\mathcal{F}$ is the $\sigma$-algebra on $\Omega$ and
$\mathbb{P}$ is a probability measure, let $L_{2}(\Gamma,X)$
denote the Hilbert space of square-integrable,
$\mathcal{F}$-measurable, $X$-valued random elements. Then one can,
in general, define a polynomial chaos basis
$\{\psi_k(\lambda(\omega))\}$, where $\lambda(\omega)$ is a
random vector, $\omega \in \Omega$, and $k = (k_1,k_2,\dots)$ is a
vector of non-negative indices. We denote the probability
density function of the random vector $\lambda$ by
$\rho(\lambda)$.

Generalized polynomial chaos (gPC)~\cite{BeyondWienerAskey} provides a framework for
representing second-order stochastic processes $\kappa\in
L_{2}(\Gamma,X)$ for arbitrary distributions of $\lambda$ by the following expansion:
\begin{equation}
\kappa(\lambda) =
\displaystyle\sum_{|k|=0}^{\infty}a_k \psi_k (\lambda),
\label{eq:expan1}
\end{equation}
where $|k| = \sum_{i} k_{i}$ is the sum of the indices of $k$ and
$\psi_k (\lambda)$ are orthonormal polynomials on $\Gamma$
with respect to $\rho(\lambda)$. Restricting our formalism to $\mathbb{R}^{n}$ (relevant for this work) we get the orthonormality is given by
\begin{equation}
\displaystyle\int_{\mathbb{R}^n}
\rho(\lambda)\psi_i(\lambda)\psi_k(\lambda)d\lambda =
\delta_{ik}, \label{eq:ortho}
\end{equation}
where $\delta_{ik}$ is the Kronecker delta product.
Depending on $\rho(\lambda)$ one can generate an appropriate
orthogonal basis for representing~$\kappa(\lambda)$. As mentioned earlier, if $\rho$ is
Gaussian, then the appropriate polynomial chaos basis is the set
of Hermite polynomials; if $\rho$ is the uniform distribution,
then the basis is the set of Legendre polynomials. For details
on the correspondence between distributions and polynomials
see~\cite{PolyReview,Ogura}. A framework to generate polynomials
for arbitrary distributions has been developed
in~\cite{BeyondWienerAskey}. The advantage of using polynomial chaos is that it provides exponential convergence in smooth processes~\cite{cameron1947orthogonal}. However, the approach suffers from the curse of dimensionality, making them infeasible for problems with more than a handful of parameters. To mitigate the curse of dimensionality, sparse grid techniques~\cite{Webster2007, Nobile2008, Zabaras2008}, iterative methods~\cite{surana_uq, Sahai2012, Klus2011}, regression based algorithms~\cite{blatman2010adaptive,blatman2011adaptive}, hierarchical methods~\cite{ma2009adaptive}, dimensionality reduction based techniques~\cite{ma2011kernel,marzouk2009dimensionality} have been developed.

In practice, the expansion in Eqn.~\ref{eq:expan1} is truncated
at a particular order, say,~$r$. One can then use Galerkin projections to
obtain a set of differential equations for the coefficients
$a_k$ in Eqn.~\ref{eq:expan1}~\cite{BeyondWienerAskey}. Typically, low order truncations are found to capture the uncertainty in smooth systems~\cite{cameron1947orthogonal}.

\section{Polynomial chaos based uncertainty quantification in Hamiltonian systems}

Consider a system described by the Hamiltonian $H(q,p;\lambda)$, where $\lambda$ is a vector of uncertain parameters with probability density $\rho(\lambda)$. The generalized coordinates and momenta $q_i$ and $p_i$ ($i=1,\ldots,N$) satisfy Hamilton's equations.
\begin{align*}
\dot{q}_i &= \frac{\partial H}{\partial p_i}, \\
\dot{p}_i &= -\frac{\partial H}{\partial q_i}.
\end{align*}

The generalized polynomial chaos (gPC) expansion of the coordinates and momenta is,
\begin{align*}
q_i(t;\lambda) &= \sum_k Q_{ik}(t) \psi_k(\lambda), \\
p_i(t;\lambda) &= \sum_k P_{ik}(t) \psi_k(\lambda),
\end{align*}
where the $\psi_k$ form an orthonormal basis with respect to the density $\rho$ (see Eq.~\ref{eq:ortho}).

The gPC coefficients $Q_{ik}$ and $P_{ik}$ follow deterministic equations, obtained by projecting the equations of motion along $\psi_s$
\begin{align*}
\int \dot{q}_i \psi_s \rho d\lambda &= \int \frac{\partial H}{\partial p_i} \psi_s \rho d\lambda, \\
\int \dot{p}_i \psi_s \rho d\lambda &= -\int \frac{\partial H}{\partial q_i} \psi_s \rho d\lambda.
\end{align*}
Inserting the gPC expansions and using the orthonormality condition~\eqref{eq:ortho} we obtain,
\begin{align*}
\dot{Q}_{is} &= \int \frac{\partial H}{\partial p_i} \psi_s \rho d\lambda, \\
\dot{P}_{is} &= -\int \frac{\partial H}{\partial q_i} \psi_s \rho d\lambda.
\end{align*}

Let us define the average Hamiltonian $\hat{H}$
\begin{equation*}
\hat{H} = \int H \rho d\lambda.
\end{equation*}
By using the gPC expansion of $q$ and~$p$  we can consider $\hat{H}$ as a function of $Q$ and~$P$, where $Q$ and $P$ denote the sets of coefficients $Q_{ik}$ and $P_{ik}$, respectively.

\begin{theorem}
\label{thm:hamiltonian_structure}
The gPC expansion coefficients $\{Q,P\}$ together with $\hat{H}(Q,P)$ form a Hamiltonian system, with the corresponding expansion coefficients $P_{ik}$ as conjugate momenta to~$Q_{ik}$. In other words,
\begin{align}
\dot{Q}_{ik} &= \frac{\partial \hat{H}}{\partial P_{ik}}, \label{eq:Qdot} \\
\dot{P}_{ik} &= -\frac{\partial \hat{H}}{\partial Q_{ik}}. \label{eq:Pdot}
\end{align}

\end{theorem}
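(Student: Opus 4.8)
The plan is to verify equations~\eqref{eq:Qdot}--\eqref{eq:Pdot} by computing the partial derivatives $\partial\hat{H}/\partial P_{ik}$ and $\partial\hat{H}/\partial Q_{ik}$ directly with the chain rule and then matching them against the projected equations of motion already derived just above the theorem. The only structural fact I need is that the \emph{same} orthonormal family $\{\psi_k\}$ (and, after truncation, the same finite index set) is used both to expand $q_i,p_i$ and to carry out the Galerkin projection; with that in hand the proof is essentially a one-line computation plus a regularity remark.

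First I would record the elementary derivatives of the expansions. Since $q_j(t;\lambda)=\sum_k Q_{jk}(t)\psi_k(\lambda)$ and $p_j(t;\lambda)=\sum_k P_{jk}(t)\psi_k(\lambda)$, and the $Q_{jk}$ and $P_{jk}$ are regarded as independent coordinates, we have $\partial q_j/\partial P_{ik}=0$, $\partial p_j/\partial P_{ik}=\delta_{ij}\psi_k(\lambda)$, and symmetrically $\partial q_j/\partial Q_{ik}=\delta_{ij}\psi_k(\lambda)$, $\partial p_j/\partial Q_{ik}=0$. Differentiating $\hat{H}=\int H(q,p;\lambda)\rho(\lambda)\,d\lambda$ under the integral sign and applying the chain rule through $q$ and $p$ then gives
\begin{align*}
\frac{\partial\hat{H}}{\partial P_{ik}} &= \int \sum_j \frac{\partial H}{\partial p_j}\,\frac{\partial p_j}{\partial P_{ik}}\,\rho\,d\lambda = \int \frac{\partial H}{\partial p_i}\,\psi_k(\lambda)\,\rho\,d\lambda, \\
\frac{\partial\hat{H}}{\partial Q_{ik}} &= \int \sum_j \frac{\partial H}{\partial q_j}\,\frac{\partial q_j}{\partial Q_{ik}}\,\rho\,d\lambda = \int \frac{\partial H}{\partial q_i}\,\psi_k(\lambda)\,\rho\,d\lambda.
\end{align*}
Comparing these with the projected equations $\dot{Q}_{is}=\int(\partial H/\partial p_i)\psi_s\rho\,d\lambda$ and $\dot{P}_{is}=-\int(\partial H/\partial q_i)\psi_s\rho\,d\lambda$ immediately yields $\dot{Q}_{ik}=\partial\hat{H}/\partial P_{ik}$ and $\dot{P}_{ik}=-\partial\hat{H}/\partial Q_{ik}$, which is the assertion. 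If desired, one can add a sentence noting that this is equivalent to preservation of the canonical symplectic form $\sum_{i,k}dQ_{ik}\wedge dP_{ik}$ under the coefficient flow.

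The algebra is trivial, so the real content lies in two places where care is needed. The first is the interchange of differentiation and integration: this should be justified by assuming $H$ is $C^1$ in $(q,p)$ with the relevant partials dominated by a $\rho$-integrable function, uniformly for $(Q,P)$ in a neighborhood of the point of interest; I would state this as a standing hypothesis. The second, and the one I expect to be the genuine subtlety worth spelling out, is that truncation at order $r$ must not break the argument: when only the indices $|k|\le r$ are retained, the sums over $j$ in the chain rule still involve only the retained coordinates, and the Galerkin projection is onto exactly the retained basis functions $\psi_s$, so the identification of the projected right-hand sides with the partials of $\hat{H}$ on the truncated phase space is exact rather than approximate. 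Once that consistency is made explicit, everything else is bookkeeping.
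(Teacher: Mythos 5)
Your proposal is correct and follows essentially the same route as the paper: differentiate the average Hamiltonian $\hat{H}$ under the integral sign via the chain rule, use $\partial p_j/\partial P_{ik}=\delta_{ij}\psi_k$ and $\partial q_j/\partial Q_{ik}=\delta_{ij}\psi_k$ (with the mixed derivatives vanishing), and match the result against the Galerkin-projected equations of motion. Your added remarks on dominated differentiation under the integral and on the truncated case merely make explicit points the paper leaves implicit (the latter is noted there in the remark following the proof), so no substantive difference remains.
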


\begin{proof}

We start with the right-hand side of Eq.~\eqref{eq:Qdot}:
\begin{align*}
\frac{\partial \hat{H}}{\partial P_{ik}} &= \int \sum_s \left(
\frac{\partial H}{\partial q_s} \frac{\partial q_s}{\partial P_{ik}}
+ \frac{\partial H}{\partial p_s} \frac{\partial p_s}{\partial P_{ik}}
\right) \rho d\lambda, \\
&= \int \sum_s \sum_r \frac{\partial H}{\partial p_s} \delta_{is} \delta_{kr} \psi_r \rho d\lambda, \\
&= \int \frac{\partial H}{\partial p_i} \psi_k \rho d\lambda, \\
&= \dot{Q}_{ik}.
\end{align*}
Similarly for Eq.~\eqref{eq:Pdot}:
\begin{align*}
\frac{\partial \hat{H}}{\partial Q_{ik}} &= \int \sum_s \left(
\frac{\partial H}{\partial q_s} \frac{\partial q_s}{\partial Q_{ik}}
+ \frac{\partial H}{\partial p_s} \frac{\partial p_s}{\partial Q_{ik}}
\right) \rho d\lambda, \\
&= \int \sum_s \sum_r \frac{\partial H}{\partial q_s} \delta_{is} \delta_{kr} \psi_r \rho d\lambda, \\
&= \int \frac{\partial H}{\partial q_i} \psi_k \rho d\lambda, \\
&= -\dot{P}_{ik}.
\end{align*}

\end{proof}

Note that the proof depends only on the \emph{form} of the expansion and does not require that the expansion be complete. In other words, the coefficients of a \emph{truncated} expansion will also form a (finite) Hamiltonian system relative to the average Hamiltonian when expressed as a function of the truncated expansion coefficients. Hence, polynomial chaos expansions when applied to Hamiltonian systems are also Hamiltonian. This result is not only interesting but also has practical implications. In particular, if the underlying system is Hamiltonian and one desires to propagate uncertainty using polynomial chaos, symplectic integrators~\cite{Cit:geom} will be needed to maintain numerical accuracy for long times.

We now illustrate the preservation of Hamiltonian structure on the Duffing oscillator with parametric uncertainty.

%
%

\subsection{Example: Duffing oscillator}

To provide an example of a Hamiltonian system with uncertainty, we consider the Duffing oscillator,
\begin{align}
\ddot{q} + \lambda q + q^3 = 0,
\label{eq:duffing_nf}
\end{align}
where $\lambda$ is a normally distributed uncertain parameter with mean $\mu(\lambda) = \lambda_0 = -1.0$ and standard deviation $\sigma(\lambda) = 0.1$. This system has the following Hamiltonian:
\begin{align}
H = \frac{1}{2} p^2 + \frac{\lambda}{2}q^2 + \frac{1}{4} q^4,
\label{eq:hamilton}
\end{align}
where $p=\dot q$.

The phase portrait of the undamped Duffing oscillator (in  Eq.~\ref{eq:duffing_nf}) is shown in Fig.~\ref{Fig:hamiltphase}. One can observe the Hamiltonian structure evident in phase space. In particular, the system has two centers at located at $(-1,0)$ and $(1,0)$. The equilibrium at $(0,0)$ is a saddle point. For a detailed discussion on the characteristics of the Duffing oscillator and its volume preserving flow we point the reader to~\cite{Gucken:book}.

The resulting dynamical system is of the form,
\begin{equation}
\begin{pmatrix}
\dot q\\
\dot p
\end{pmatrix}= \begin{pmatrix}
p\\
- \lambda q - q^3
\end{pmatrix}.
\label{eq:duffingsimple}
\end{equation}

Assuming that $\lambda$ is an uncertain parameter, we now perform a polynomial chaos expansion~\cite{BeyondWienerAskey} given by,
\begin{equation}
\begin{split}
q(t;\lambda) &=
\displaystyle\sum_{i=0}^{r}Q_{i}(t)\psi_{i}(\lambda),\\
p(t;\lambda) &=
\displaystyle\sum_{i=0}^{r}P_{i}(t)\psi_{i}(\lambda).
\end{split}
\label{eq:expx}
\end{equation}
By substituting the above expansion, for $r=1$, into Eq.~\ref{eq:duffingsimple} and imposing orthogonality constraints we get the following set of equations,
\begin{equation}
\begin{pmatrix}
\dot Q_{0}\\
\dot P_{0}\\
\dot Q_{1}\\
\dot P_{1}
\end{pmatrix}=\begin{pmatrix}
P_{0}\\
- \lambda_{0}Q_{0} - \sigma Q_{1} - (Q_{0}^3 + 3Q_{0}Q_{1}^2)\\
P_{1}\\
- \lambda_{0}Q_{1} - \sigma Q_{0} - 3(Q_{1}^3 + Q_{0}^2Q_{1})
\end{pmatrix}.
\label{eq:coeffs}
\end{equation}
It is easy to check that the Hamiltonian for the above system of equations is given by
\begin{equation}
\begin{split}
H_{pc} &= \frac{1}{2}P_{0}^{2} + \frac{1}{2}P_{1}^{2} + \frac{\lambda_0}{2}(Q_{0}^{2} + Q_{1}^{2}) + \sigma Q_{0}Q_{1} \\
&\quad + \frac{3}{2}Q_{0}^{2}Q_{1}^2 + \frac{1}{4} Q_{0}^4 + \frac{3}{4}Q_{1}^4.
\end{split}
\label{eq:Hpc}
\end{equation}
Similar Hamiltonians can be constructed for higher order expansions (arbitrary $r$) in the Duffing oscillator as well as for other Hamiltonian systems, such as the double pendulum and $N$~bodies interacting through Newton's law of gravitation.

\begin{figure}
  \centering
  \includegraphics[scale=0.25]{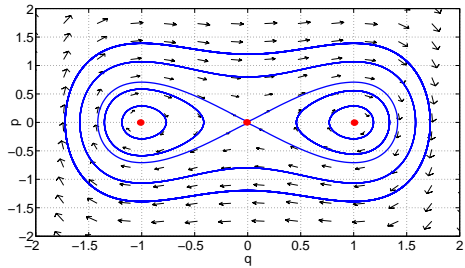}
  \caption{Phase portrait of the Duffing oscillator.\label{Fig:hamiltphase}}
\end{figure}

\subsection{Harmonic oscillator with uncertain frequency}

We have proved that the PC equations have Hamiltonian structure. We now combine this with other results in Hamiltonian theory to show that, for certain problems, the uncertainty cannot be captured by a finite PC expansion for long times, \emph{regardless of the order of the expansion}. Specifically, we focus on a harmonic oscillator with uncertain frequency and certain initial conditions:
\begin{equation}
\ddot{q} + \omega^2 q = 0 \qquad q(0) = 1 \qquad \dot{q}(0) = 0.
\label{eq:harmonic_oscillator}
\end{equation}
We choose $\omega$ uniformly distributed in $(\omega_1, \omega_2)$ and define $\omega \equiv \omega_0 + \alpha \lambda$, with $\lambda \sim U(-1,1)$, so $\alpha = (\omega_2 - \omega_1)/2$ is a measure of the magnitude of the uncertainty in frequency.

\begin{theorem}
The coefficients of a finite PC expansion of the true solution for this system converge to zero at long times.
\end{theorem}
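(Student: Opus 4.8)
The plan is to bypass the Galerkin coefficient ODEs entirely and instead analyze the gPC coefficients of the \emph{exact} solution of~\eqref{eq:harmonic_oscillator}, since those are exactly the quantities the statement refers to. With the deterministic initial data one has $q(t;\lambda)=\cos\bigl((\omega_0+\alpha\lambda)t\bigr)$ and $\dot q(t;\lambda)=-(\omega_0+\alpha\lambda)\sin\bigl((\omega_0+\alpha\lambda)t\bigr)$, and since $\lambda\sim U(-1,1)$ has density $\rho\equiv\tfrac12$ on $(-1,1)$ with $\psi_k$ the normalized Legendre polynomials, the $k$th coefficient of the true $q$ is
\begin{equation*}
Q_k(t)=\int_{-1}^{1}\tfrac12\cos\bigl((\omega_0+\alpha\lambda)t\bigr)\psi_k(\lambda)\,d\lambda
=\mathrm{Re}\left(\tfrac12\, e^{i\omega_0 t}\int_{-1}^{1}\psi_k(\lambda)\,e^{i\alpha\lambda t}\,d\lambda\right).
\end{equation*}
First I would observe that, for each fixed $k$, the inner integral is the Fourier transform evaluated at $-\alpha t$ of the integrable function obtained by extending $\psi_k$ by zero outside $(-1,1)$; hence the Riemann--Lebesgue lemma gives $Q_k(t)\to0$ as $t\to\infty$, and therefore the whole truncated vector $(Q_0(t),\dots,Q_r(t))$ tends to $0$.

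To obtain a rate --- which also sharpens the ``failure'' statement alluded to in the introduction --- I would instead integrate by parts once in the inner integral, producing a boundary term $\bigl[\psi_k(\lambda)e^{i\alpha\lambda t}\bigr]_{-1}^{1}\big/(i\alpha t)$ plus the integral of $\psi_k'$ against $e^{i\alpha\lambda t}$, both of which are $O(1/t)$; thus $Q_k(t)=O(1/t)$ uniformly for $k\le r$. The identical computation applied to $\dot q$, whose integrand is again a polynomial times an oscillatory exponential, shows the momentum coefficients $P_k(t)$ also decay like $O(1/t)$, so both halves of the gPC state of the \emph{true} solution collapse to the origin. Alternatively, Parseval gives $\sum_{k\ge0}Q_k(t)^2=\int_{-1}^{1}\tfrac12\cos^2\bigl((\omega_0+\alpha\lambda)t\bigr)\,d\lambda\to\tfrac12$, so the $L^2(\rho)$ energy of the exact solution is, in the limit, carried entirely by modes of order exceeding $r$.

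I do not expect a serious technical obstacle here; what needs care is conceptual, namely keeping separate (i)~the time-dependent $L^2(\rho)$ projection of the exact solution, whose coefficients we have just shown vanish, and (ii)~the solution of the Galerkin-projected coefficient system, which by Theorem~\ref{thm:hamiltonian_structure} is Hamiltonian --- here with $\hat H=\tfrac12\sum_k P_k^2+\tfrac12\,Q^{\top}MQ$, $M_{jk}=\int\omega^2\psi_j\psi_k\rho\,d\lambda$ a symmetric positive-definite matrix (since $\omega^2>0$ almost surely for a nondegenerate uniform $\omega$) --- so its trajectory lies on a bounded invariant torus and cannot approach the origin from the nonzero initial data $Q(0)=(1,0,\dots,0)$, $P(0)=0$. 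Putting the two halves together is what yields the announced conclusion: the exact projection decays to zero like $1/t$ while the gPC dynamics persist at order one, so the discrepancy is itself of order one and the finite expansion fails at long times regardless of $r$. The actual write-up needs only to state these two halves cleanly and to record that the harmless hypothesis $\omega^2>0$ a.s.\ is precisely what makes $M$ definite and hence keeps the gPC flow bounded away from the origin.
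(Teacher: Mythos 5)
Your proof is correct and works on the same object as the paper's: the Legendre coefficients of the exact solution $q(t;\lambda)=\cos\bigl((\omega_0+\alpha\lambda)t\bigr)$, treated as oscillatory integrals in $\lambda$. The difference is in how the decay is established. The paper expands each $\psi_k$ in monomials, writes $Q_k(t)=\tfrac12\sum_{\ell}B_{k\ell}I_\ell(t)$, and integrates by parts twice to obtain a recurrence for $I_\ell$ whose terms vanish as $t\to\infty$; you instead recognize the integral as the Fourier transform of $\psi_k$ (extended by zero) evaluated at a frequency proportional to $t$, so the Riemann--Lebesgue lemma gives convergence to zero immediately, and a single integration by parts yields the uniform $O(1/t)$ rate for all $k\le r$. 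Your route is shorter, dispenses with the $B_{k\ell}$ bookkeeping and the recurrence, and generalizes verbatim to any bounded basis and any integrable density, while the paper's computation is more explicit and elementary. Incidentally, your treatment of the momentum coefficients uses the correct $\dot q=-(\omega_0+\alpha\lambda)\sin\bigl((\omega_0+\alpha\lambda)t\bigr)$, where the paper's displayed $P_k$ carries only the factor $-\alpha\lambda$; the decay argument is unaffected either way, and your Parseval identity (energy migrating to modes above order $r$) is a nice complement not present in the paper. Your final paragraph goes beyond the stated theorem: the paper deduces the long-time failure of the truncated expansion from Liouville's theorem (volume preservation of the flow guaranteed by Theorem~\ref{thm:hamiltonian_structure}), whereas you argue via conservation of the positive definite quadratic $\hat H$, which confines the Galerkin trajectory to a level set bounded away from the origin. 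Both are legitimate consequences of the Hamiltonian structure; your energy argument is specific to this quadratic example but gives a cleaner quantitative obstruction, while the paper's volume-preservation argument is the one it emphasizes because it applies in principle to more general Hamiltonian systems.
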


\begin{proof}
 The solution for this system is
\begin{equation}
q(t;\lambda) = \cos (\omega_0 + \alpha \lambda) t
\label{eq:harmonic_solution}
\end{equation}
and the PC expansion in this case is
\begin{align}
q(t;\lambda) &= \sum_{k=0}^r Q_k(t) \psi_k(\lambda) \\
p(t;\lambda) &= \sum_{k=0}^r P_k(t) \psi_k(\lambda),
\end{align}
where $\psi_k(\lambda) = \sqrt{2k+1} \,{\cal P}_k(\lambda)$, and ${\cal P}_k$ is the usual Legendre polynomial of order~$k$. $\{\psi_k\}$ is a set of orthonormal polynomials in $[-1,1]$ with respect to the density $\rho(\lambda) = 1/2$. Explicitly,
\begin{align}
\psi_k(\lambda) &= \sum_{\ell=0}^k B_{k\ell} \lambda^\ell \\
B_{k\ell} &= \sqrt{2k+1} \, 2^k \binom{k}{\ell} \binom{(k+\ell-1)/2}{k}.
\end{align}
We project the PC expansion onto this basis:
\begin{align}
\int_{-1}^1 q(t;\lambda) \rho(\lambda) d\lambda &= \int_{-1}^1 \sum_{k=0}^r Q_k(t) \psi_k(\lambda) \rho(\lambda) d\lambda \\
\int_{-1}^1 p(t;\lambda) \rho(\lambda) d\lambda &= \int_{-1}^1 \sum_{k=0}^r P_k(t) \psi_k(\lambda) \rho(\lambda) d\lambda.
\end{align}
If we had an infinite-order expansion ($r=\infty$), the sum would be a series, and to exchange the integral and the series we would require uniform convergence of the sum. For any \emph{finite} sum, however, we can do the exchange and obtain these equations for the coefficients:
\begin{align}
Q_k(t) &= \int_{-1}^1 \cos[(\omega_0 + \alpha \lambda) t] \psi_k(\lambda) \rho(\lambda) d\lambda \\
P_k(t) &= \int_{-1}^1 (-\alpha \lambda) \sin[(\omega_0 + \alpha \lambda) t] \psi_k(\lambda) \rho(\lambda) d\lambda.
\end{align}
We now show that these coefficients go to zero as $t\to\infty$. Indeed,
\[
Q_k(t) = \frac{1}{2} \sum_{\ell=0}^k B_{k\ell} I_\ell,
\]
where we can integrate by parts twice to obtain a recurrence formula:
\begin{align}
I_\ell &= \int_{-1}^1 \lambda^\ell cos(\omega_0 + \alpha \lambda)t \, d\lambda \\
&= \frac{1}{\alpha t} [\sin\omega_2 t - (-1)^\ell \sin \omega_1 t] \\
&+ \frac{\ell}{(\alpha t)^2} [\cos \omega_2 t - (-1)^{\ell-1} \cos \omega_1 t ]
- \frac{\ell(\ell-1)}{(\alpha t)^2} I_{\ell-2}.
\end{align}
Since both $I_0$ and $I_1$ go to zero as $t\to\infty$, all $Q_k(t) \to 0$ as $t\to\infty$. Similarly, we can prove that all $P_k(t) \to 0$ as $t\to\infty$.

\end{proof}

With all the coefficients in any finite expansion of the true solution converging to zero as $t\to\infty$, the volume of this flow decreases. On the other hand, the solution of the PC equations obtained must preserve volume, so those coefficients cannot go to zero to match the behavior of the true solution without violating Liouville's theorem~\cite{Liboff1998}. In other words, Liouville's theorem prevents any finite PC expansion from representing the true solution of this system at long times.

\section{Polynomial chaos based uncertainty quantification in systems with multiple time scales}

Systems with multiple time scales are prevalent in a wide variety of applications related to smart grids~\cite{He2011, Parpas2011, Susuki2011}, building systems~\cite{surana_uq}, and micromechanical oscillators~\cite{Tuhin_mems1,Tuhin_mems2,Tuhin_mems3}, to name a few. Simulating these systems is challenging due to their inherent stiffness~\cite{Gear1971}. The method of multiple scales (or averaging) is a very popular approach for simulating such systems. The approach typically involves perturbing off a dynamical system whose solution can be computed in closed form~\cite{Cit:Rand-notes}. Note that this approach is applicable only in the scenario that the perturbation is small $O(\epsilon)$. The method of multiple scales captures the dynamics of the system on an $n-1$ dimensional section transversal to the flow, also known as the Poincar\'e section~\cite{Gucken:book}. For a detailed discussion on the method of multiple scales or averaging theory, see~\cite{Cit:Rand-notes,Gucken:book}.

To the best of our knowledge, no attempt has been made to extend polynomial chaos based methods to systems with multiple time scales using the method of multiple scales. Here we apply polynomial chaos to the two-time scale system given below,
\begin{align}
\ddot q + q +\epsilon\delta\dot q + \epsilon\beta q^3 = \epsilon\gamma \cos \omega t,
\label{eq:twotime_eq}
\end{align}
where $\epsilon\delta$ is the system damping, $1$ and $\epsilon\beta$ are the linear and nonlinear stiffnesses respectively, and $\epsilon\gamma$ and $\omega$ are the forcing amplitude and frequency respectively. Note that in the above system, we assume that $\epsilon$ is a small parameter (i.e. $\epsilon\ll 1$). We assume that $\gamma = \gamma_{0} + \sigma(\gamma)\eta$ is an uncertain parameter , where $\gamma_{0} = 1.0$ is the mean of $\gamma$ and $\sigma(\gamma) = 0.1$ is its standard deviation ($\eta$ is a normal random variable with zero mean and unit variance).

Using the two time scales as $\xi=\omega t$ and $\chi = \epsilon t$, one can derive the averaged equations for the system~\cite{Cit:Rand-notes,Gucken:book}. This is done by substituting $\frac{d}{dt} = \omega\frac{\partial}{\partial\xi} + \epsilon\frac{\partial}{\partial\chi}$, $\frac{d^{2}}{dt^{2}} = \omega^{2}\frac{\partial^{2}}{\partial\xi^{2}} + 2\omega\epsilon\frac{\partial^{2}}{\partial\xi\partial\chi} + \epsilon^{2}\frac{\partial^{2}}{\partial\chi^{2}}$, and $q(\xi,\chi) = q_{0}(\xi,\chi) + \epsilon q_{1}(\xi,\chi) + \hdots$ in Eqn.~\ref{eq:twotime_eq}. Collecting terms, we obtain
\begin{align}
O(1): \frac{\partial^{2}q_{0}}{\partial\xi^{2}} + q_{0} &=0,\label{eq:o1}\\
O(\epsilon): \frac{\partial^{2}q_{1}}{\partial\xi^{2}} + q_{1} &= -2\frac{\partial^{2}q_{0}}{\partial\xi\partial\chi} - \delta\frac{\partial q_{0}}{\partial\xi} - \beta q_{0}^{3} + \gamma\cos\xi \label{eq:o2}.
\end{align}

The solution to Eqn.~\ref{eq:o1}, is $q_{0}(\xi,\chi) = A(\chi)\cos\xi + B(\chi)\sin\xi$. Substituting the solution into Eqn.~\ref{eq:o2} and imposing that there be no secular terms~\cite{Cit:Rand-notes,Gucken:book} yields the averaged equations
\begin{equation}
\begin{split}
2\frac{\partial A}{\partial\chi} + \delta A -\frac{3}{4}\beta B(A^{2} + B^{2}) &= 0, \\
2\frac{\partial B}{\partial\chi} + \delta B +\frac{3}{4}\beta A(A^{2} + B^{2}) &=\gamma.
\end{split}
\label{eq:twotime_averaged}
\end{equation}
Note that the above dynamical system captures the dynamics on the Poincar\'e section of the original system~\cite{Gucken:book}. From here on we take $\beta = 1$. We will also focus on the deterministic initial condition $q=2$, $\dot{q} = 0$, so $A(0) = 2$ and $B(0) = 0$.

We now apply a polynomial chaos expansion to Eq.~\ref{eq:twotime_averaged}, to first order:
\begin{equation}
\begin{split}
A(\chi, \eta) &= a_0(\chi) H_0(\eta) + a_1(\chi) H_1(\eta) \\
B(\chi, \eta) &= b_0(\chi) H_0(\eta) + b_1(\chi) H_1(\eta),
\end{split}
\label{eq:PC_expansion_of_slow_equations}
\end{equation}
where $H_0(\eta) = 1$ and $H_1(\eta) = \eta$ are the first two probabilist's Hermite polynomials.  Substituting Eq.~\ref{eq:PC_expansion_of_slow_equations} into Eq.~\ref{eq:twotime_averaged} gives
\begin{equation}
\begin{split}
2a_0' &= - \delta a_0 + \frac{3}{4} b_0 (a_0^2 + b_0^2) \\
2b_0' &= - \delta b_0 - \frac{3}{4} a_0 (a_0^2 + b_0^2) + \gamma_0 \\
2a_1' &= - \delta a_1 + \frac{3}{4} (2a_0 b_0 a_1 + a_0^2 b_1 + 3b_0^2 b_1) \\
2b_1' &= - \delta b_1 - \frac{3}{4} (3a_0^2 a_1 + b_0^2 a_1 + 2a_0 b_0 b_1) + \sigma,
\end{split}
\label{eq:twotime_averaged_PC}
\end{equation}
with initial condition $(a_0,b_0,a_1,b_1) = (2,0,0,0)$.

For comparison purposes, we also do an equivalent polynomial chaos expansion of the original two-time equations, defining $x = q$, $y = \dot{q}$, and doing a first order expansion
\begin{align*}
x(t, \eta) &= x_0(t) H_0(\eta) + x_1(t) H_1(\eta) \\
y(t, \eta) &= y_0(t) H_0(\eta) + y_1(t) H_1(\eta).
\end{align*}
The resulting system is
\begin{equation}
\begin{split}
\dot{x}_0 &= y_0 \\
\dot{y}_0 &= -x_0 - \epsilon \delta y_0 - \epsilon x_0^3 + \epsilon \gamma_0 \cos \omega t \\
\dot{x}_1 &= y_1 \\
\dot{y}_1 &= -x_1 - \epsilon \delta y_1 - 3\epsilon x_0^2 x_1 + \epsilon \sigma \cos \omega t,
\end{split}
\label{eq:twotime_PC}
\end{equation}
with initial condition $(x_0,y_0,x_1,y_1) = (2,0,0,0)$. In what follows, we choose $\omega=1$ and, more importantly, $\delta = 0$. This is done in order to avoid having a system in which the dissipation artificially reduces the overall error.

Figure~\ref{Fig:TwoTime_comparison} shows the error in mean and standard deviation of both PC expansions compared with Monte Carlo simulations with $10^3$ samples of the original two-time system, evaluated at the Poincar\'e sections where the forcing is maximal: $t = 2\pi n$ ($n = 0, 1, 2, \ldots$). Solutions of both PC expansions and the Monte Carlo trajectories of the original system were obtained using Matlab's ode45 solver with relative tolerance of $10^{-6}$. The error has two sources: the time averaging and the truncation of the polynomial chaos expansion. As we decrease~$\epsilon$, the error from time averaging decreases and the main source of error becomes the truncation of the polynomial expansion.

As $\epsilon$ decreases, the two expansions yield increasingly similar results, but the PC expansion of the original equation becomes more expensive to compute, scaling as $1/\epsilon$, because the solver needs to trace each fast oscillation, even if we're only interested in the slow evolution of the Poincar\'e section. Figure~\ref{Fig:TwoTime_function_evaluations} shows the number of function evaluations required by this expansion as $\epsilon$ decreases (solid line) compared with the $\epsilon$-independent behavior of the averaged PC.

\begin{figure}
  \centering
  \includegraphics[width=0.5\textwidth]{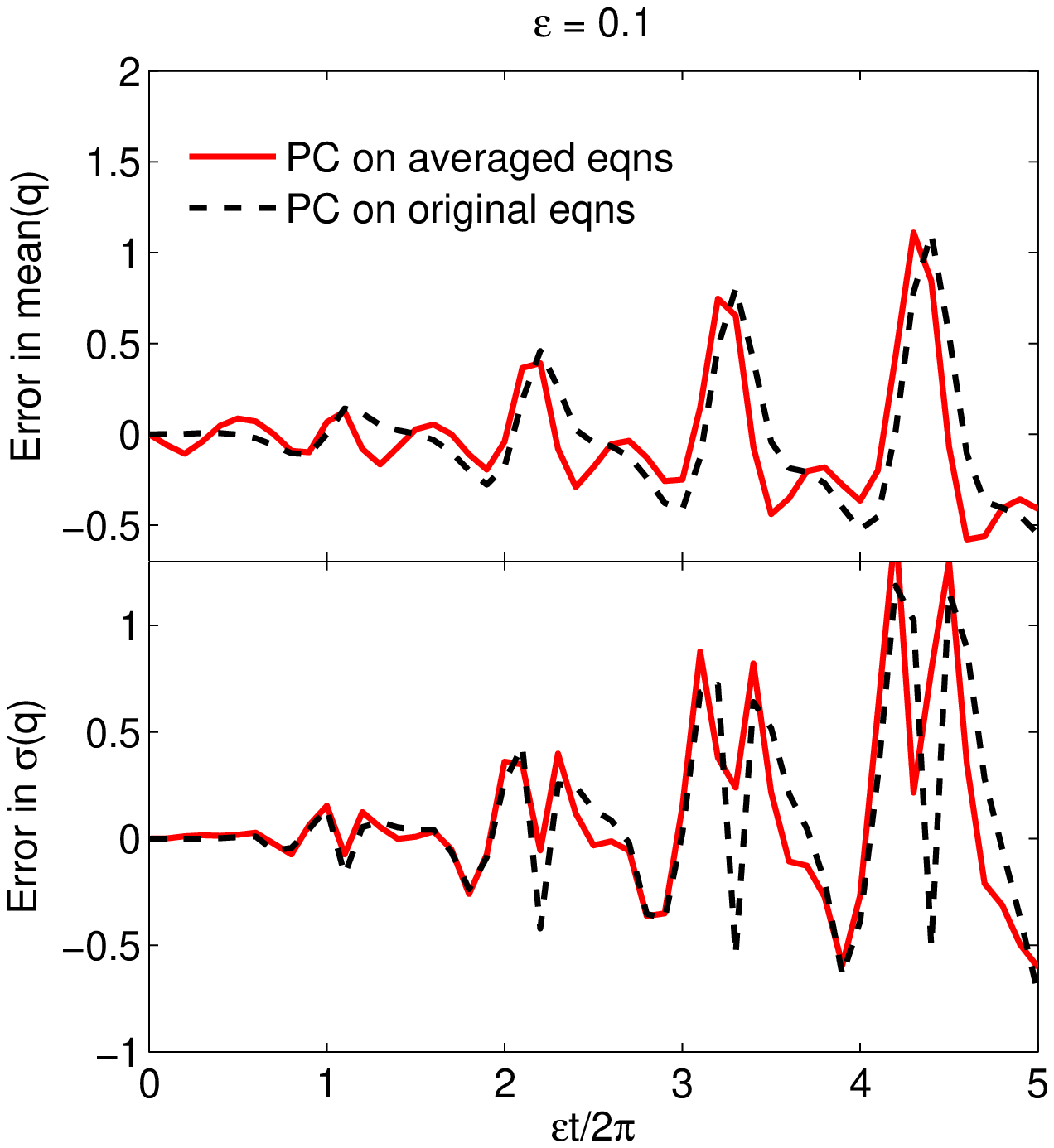}%
\includegraphics[width=0.5\textwidth]{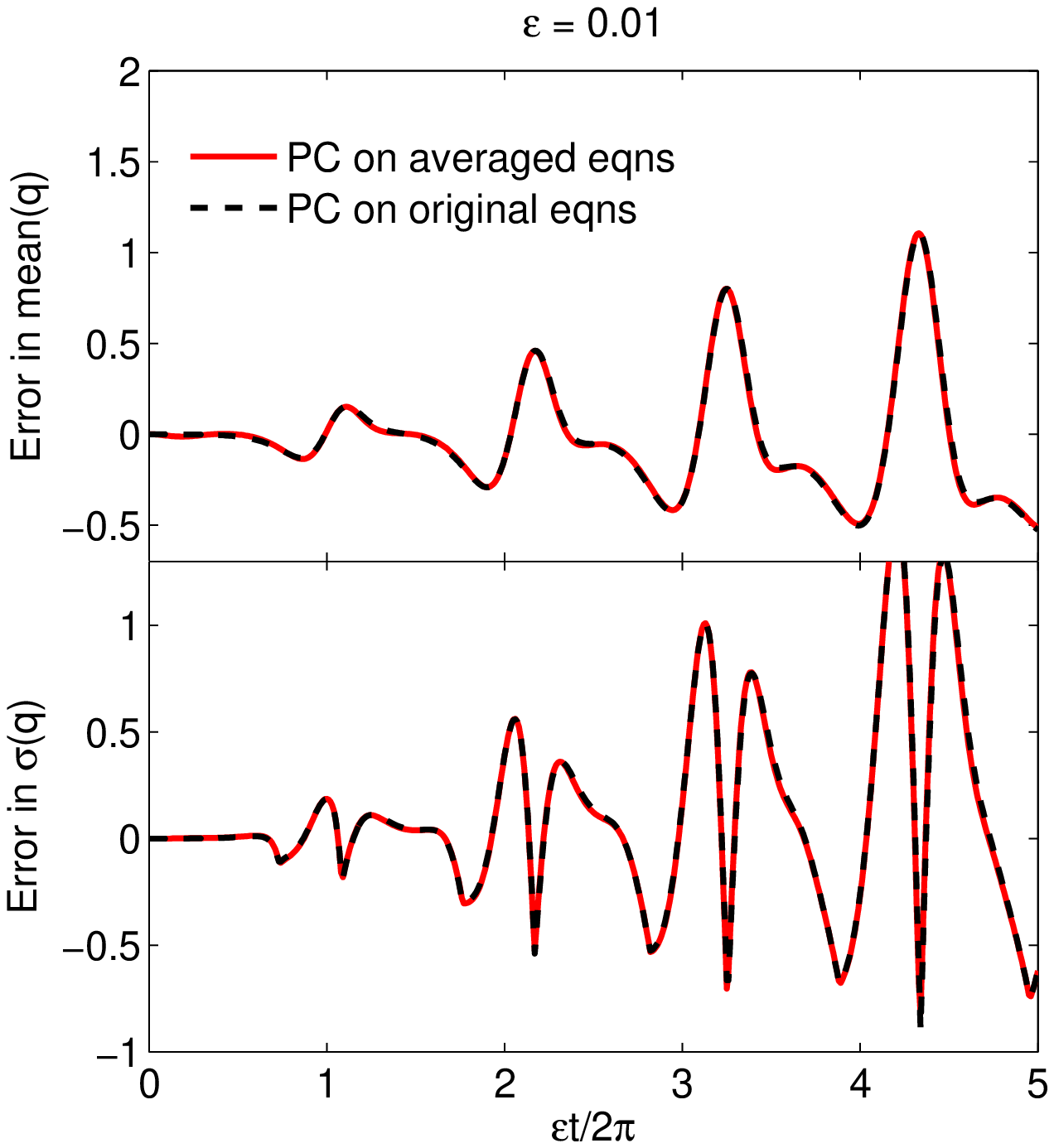}
  \caption{Absolute deviation of the polynomial chaos expansions on the averaged equations (solid line) and on the original equations (dashed line) of the two-time oscillator. The reference is a Monte Carlo simulation with 1000 samples.\label{Fig:TwoTime_comparison}}
\end{figure}

\begin{figure}
  \centering
  \includegraphics[width=0.5\textwidth]{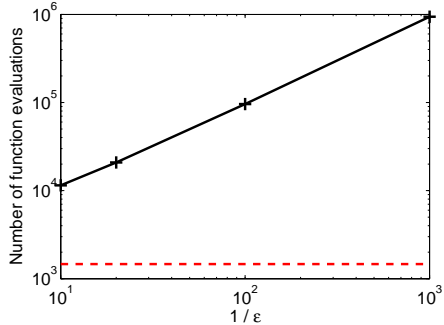}%
  \caption{Function evaluations required to solve the polynomial chaos expansion of the original two-time system as a function of the time scale separation parameter~$\epsilon$. The solution was obtained using Matlab's ode45 with relative tolerance $10^{-6}$. The dashed line shows the number of function evaluations required by the polynomial expansion on the averaged equations, which is independent of~$\epsilon$. \label{Fig:TwoTime_function_evaluations}}
\end{figure}

\section{Polynomial chaos based uncertainty quantification in chaotic systems}

We now demonstrate that dynamics of the coefficients of the polynomial chaos expansions can be chaotic, if the underlying dynamical system is chaotic. We will then show that if the underlying system is chaotic, the applicability of polynomial chaos is significantly reduced. While this is a natural result, it is not obvious. PC aims to capture the moments of the output distribution and not individual trajectories of the underlying dynamical system. Since mixing introduces averaging that could, in principle, smooth out the moments and allow them to be captured accurately, it is not obvious that chaos would necessarily make predictions worse.

For this demonstration, we pick the forced Duffing oscillator~\cite{Gucken:book} given by
\begin{equation}
\ddot q + \delta \dot q + \lambda q + q^3 = \gamma \cos \omega t,
\label{eq:duffing}
\end{equation}
where $\delta = 0.2$, $\gamma = 0.3$, $\omega = 1.0$, and $\lambda = -1.0$. Note that the
above equation (Eq.~\ref{eq:duffing}) is the same as Eq.~\ref{eq:duffing_nf} with the addition of damping and forcing terms. We can write Eq.~\ref{eq:duffing} in the form
\begin{equation}
\begin{pmatrix}
\dot q\\
\dot p
\end{pmatrix}= \begin{pmatrix}
p\\
- \delta p - \lambda q - q^3 + \gamma \cos\omega t
\end{pmatrix}.
\label{eq:duffing2d}
\end{equation}
The dynamics of the above system have been studied extensively (see, e.g., \cite{Gucken:book}). For the forced Duffing oscillator, the Poincar\'e section is given by taking ``snapshots'' of the system at phase $\phi = 0$, where $\phi = (\omega t\, \mbox{mod}\, 2\pi)$. The intersection of a single trajectory with the Poincar\'e section can be seen in Fig.~\ref{Fig:ForcedDuffing}, starting from the initial condition $(q,p)=(1,0)$.

The dynamics of the forced Duffing oscillator (at the parameter values given above) is well known to be chaotic~\cite{Gucken:book}. In fact, one can numerically compute Lyapunov exponents ($\Xi$)~\cite{Cit:Stro} for the above system and show that they are positive. Note that $\Xi>0$ is considered to be the signature of a chaotic system since it implies that the system response is sensitive to initial conditions. We find that the nominal system gives $\Xi \approx 0.93$, hence (numerically) implying the existence of chaos.

\begin{figure}
  \centering
  \includegraphics[scale=0.4]{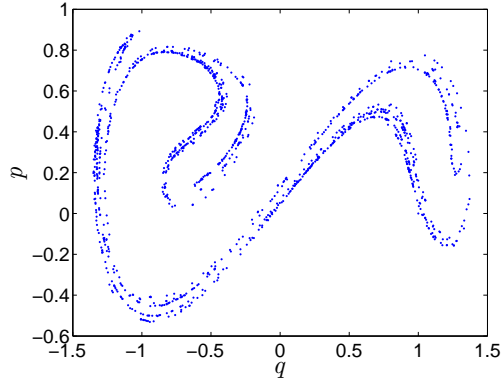}
  \caption{Poincar\'e section of the forced Duffing oscillator with damping at phase $\phi=0$. The oscillator displays chaotic dynamics and the attractor above displays the stretching and folding properties of chaos~\cite{Gucken:book}. \label{Fig:ForcedDuffing}}
\end{figure}

 Let us now assume that $\lambda$ is normally distributed. Let $\lambda = \lambda_{0} + \sigma\eta$, where $\lambda_{0} = -1.0$ is the mean of $\lambda$ and $\sigma = 0.1$ is its standard deviation. Is is easy to see that $\eta$ will now become a normally distributed random variable with zero mean and unit standard deviation. Since $\eta$ is normally distributed, we use Hermite polynomials in our expansion~\cite{Wiener}. In Eq.~\ref{eq:duffing2d} we use the expansion in Eq.~\ref{eq:expx}.  Truncating the expansion at $r=1$ gives the following set of differential equations:
\begin{equation}
\begin{pmatrix}
\dot Q_{0}\\
\dot P_{0}\\
\dot Q_{1}\\
\dot P_{1}
\end{pmatrix}=\begin{pmatrix}
P_{0}\\
- \delta P_{0} - \lambda_{0}Q_{0} - \sigma Q_{1} - (Q_{0}^3 + 3Q_{0}Q_{1}^2) + \gamma \cos\omega t\\
P_{1}\\
- \delta P_{1} - \lambda_{0}Q_{1} - \sigma Q_{0} - 3(Q_{1}^3 + Q_{0}^2Q_{1}).
\end{pmatrix},
\label{eq:a}
\end{equation}
Note that there is nothing special about order $r=1$, and the same procedure can be repeated for any $r$. The initial condition on the generalized coordinates $q$ and conjugate momenta $p$ gets incorporated into the initial conditions on the coefficients of expansion: $Q_{i}$ and $P_{i}$. The Poincar\'e section for Eqs.~\ref{eq:a} are shown in Fig.~\ref{Fig:Coeff_certain}. In this case, the stretching and folding structure of the Duffing oscillator is not as evident as in Fig.~\ref{Fig:ForcedDuffing}. However, the resulting dynamical system in Eqs.~\ref{eq:a} has a Lyapunov exponent of $\Xi \approx 0.73$, implying the persistence of sensitive dependence to initial conditions. Note that the route to chaos~\cite{Gucken:book} for the the original Duffing oscillator is well known. In~\cite{Cit:DuffingR1,Cit:DuffingR2}, the authors numerically demonstrate that the forced Duffing oscillator becomes chaotic due to a sequence of period doubling bifurcations. Due to the onset of chaos, the solution becomes increasingly difficult for polynomial chaos to capture. We point out that polynomial chaos is known to suffer from an inability to track output distributions for long term simulations~\cite{Cit:long-term}.

 The reason for the inability of polynomial chaos to track the output distribution lies in the increasingly oscillatory nature of the solution $q(t;\lambda)$ in terms of the uncertain parameter $\lambda$. In other words, any finite expansion in Eqn.~\ref{eq:expx} will fail at some $t$, since $q(t,\lambda)$ is too oscillatory in terms of $\lambda$. The greater the oscillatory nature of the output in terms of $\lambda$, the worse polynomial chaos performs. In~\cite{hybrid_uq}, the oscillatory nature of the output is again found to adversely impact the propagation of uncertainty through hybrid dynamical systems. However, we find that chaotic dynamics exacerbates this phenomenon. In particular, due to the coexistence of periodic orbits of different periods along with the chaotic attractor, the solution is found to rapidly become oscillatory with respect to $\lambda$ (depicted in Fig.~\ref{Fig:Duffing_osc}). Polynomial chaos is unable to track the first moment (mean) of $q(t;\lambda)$ beyond certain time (10~secs in Fig.~\ref{Fig:Duffing_tracking}, 25~secs in Fig.~\ref{Fig:Duffing_tracking_IC4}). In contrast to the forced Duffing oscillator, polynomial chaos is able to accurately track the mean of $q$ in the undamped and unforced Duffing oscillator with an order of expansion of $r=1$ (see Figs.~\ref{Fig:Duffing_simple_tracking} and~\ref{Fig:Duffing_simple_tracking_IC4}). Note that all parameters and initial conditions are held constant here (except for the removal of the forcing and damping terms). Hence, one needs to be careful when applying polynomial chaos to systems that are chaotic. We point to a caveat that if the initial condition is chosen close to the stable and unstable manifolds of the saddle equilibrium $(0,0)$, polynomial chaos performs poorly on the undamped, unforced oscillator case due to the discontinuity associated with the basin boundary~\cite{hybrid_uq}.

\begin{figure}
  \centering
  \subfigure[]{\includegraphics[scale=0.4]{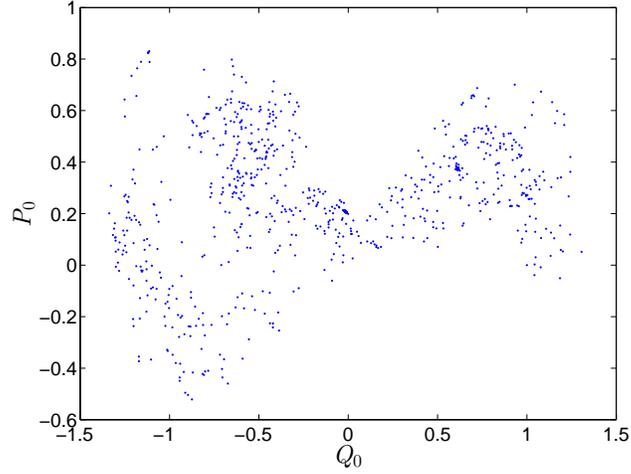}\label{Fig:Coeff0_certain}}
  \subfigure[]{\includegraphics[scale=0.4]{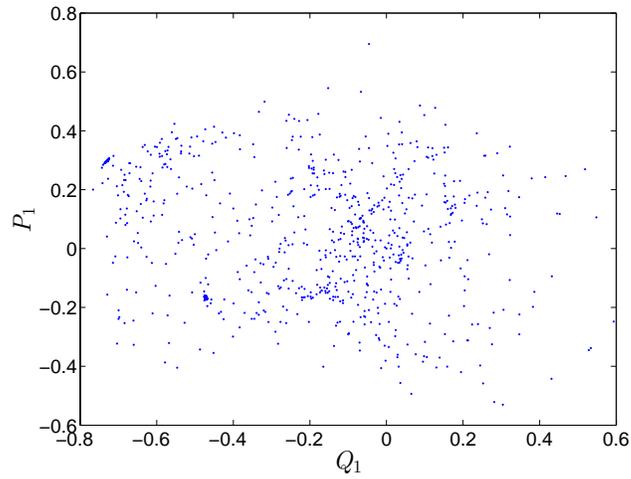}\label{Fig:Coeff1_certain}}
  \caption{Poincar\'e section at $\phi = 0$ of the dynamical system with certain initial conditions for the $0$-th (top) and first (bottom) order coefficients in the polynomial chaos expansion.\label{Fig:Coeff_certain}}
\end{figure}

\begin{figure}
  \centering
  \includegraphics[scale=0.4]{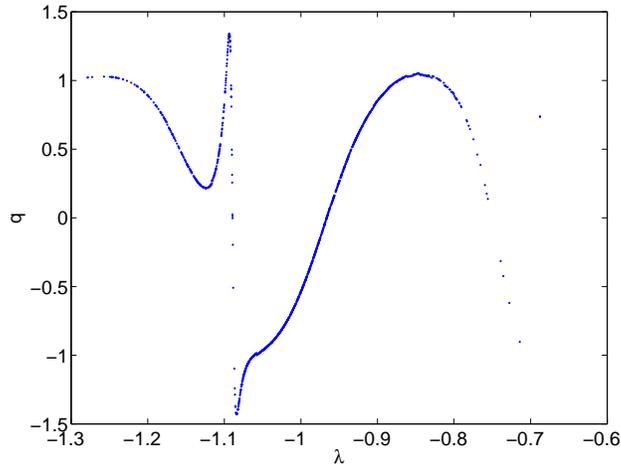}
  \caption{$q(t;\lambda)$ as a function of $\lambda$ for the Duffing oscillator at $t\approx 15$ sec. The solution is already too oscillatory in terms of $\lambda$ for an expansion to $r=1$.\label{Fig:Duffing_osc}}
\end{figure}

\begin{figure}
  \centering
  \includegraphics[scale=0.4]{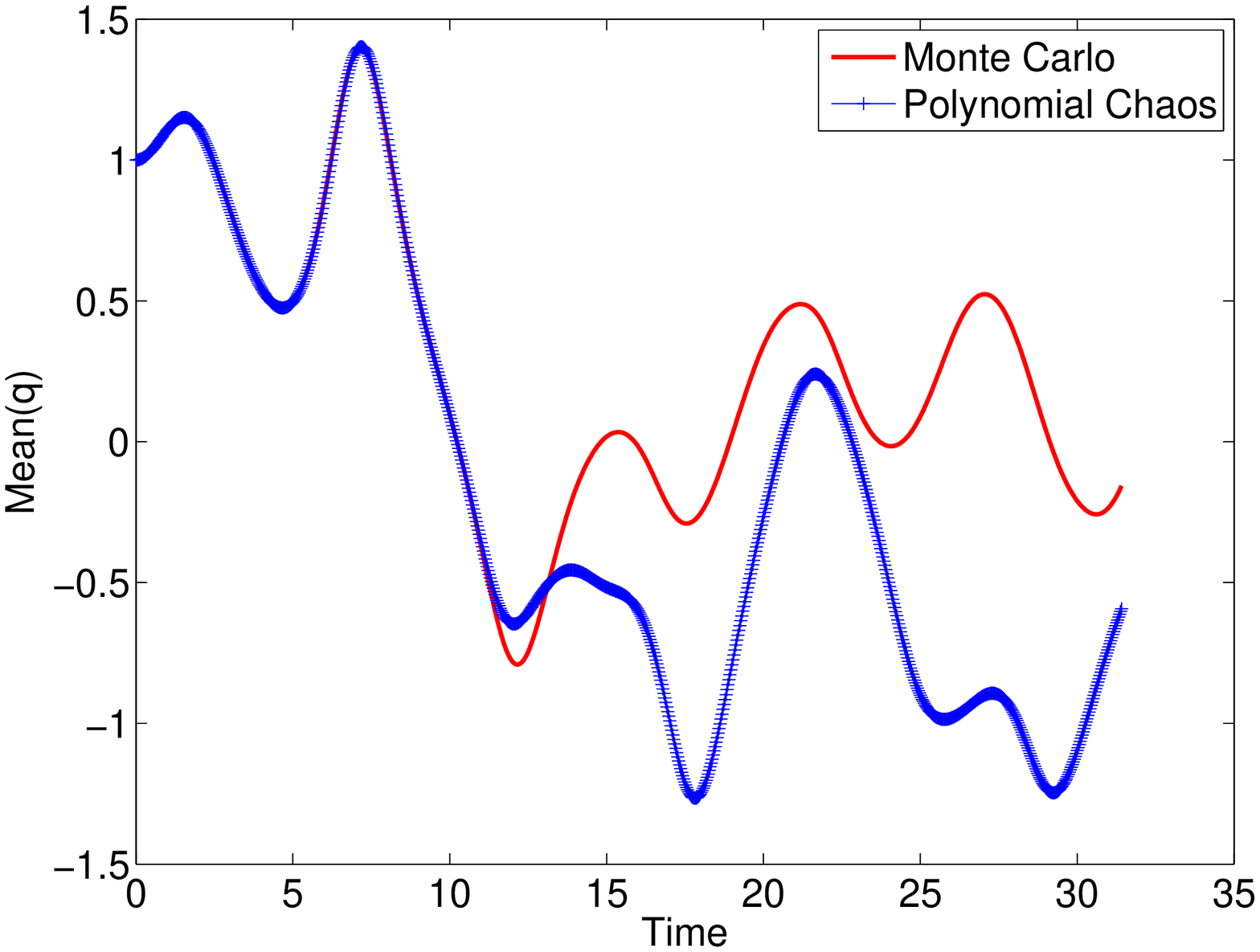}
  \caption{Comparison of Monte Carlo with polynomial chaos for the mean of $q$ as a function of time in the Duffing oscillator with initial condition $(q,p)=(1,0)$. After $t\approx 10$s, polynomial chaos (expansion to $r=1$) is unable to accurately track the first moment of the output distribution.\label{Fig:Duffing_tracking}}
\end{figure}

\begin{figure}
  \centering
  \includegraphics[scale=0.4]{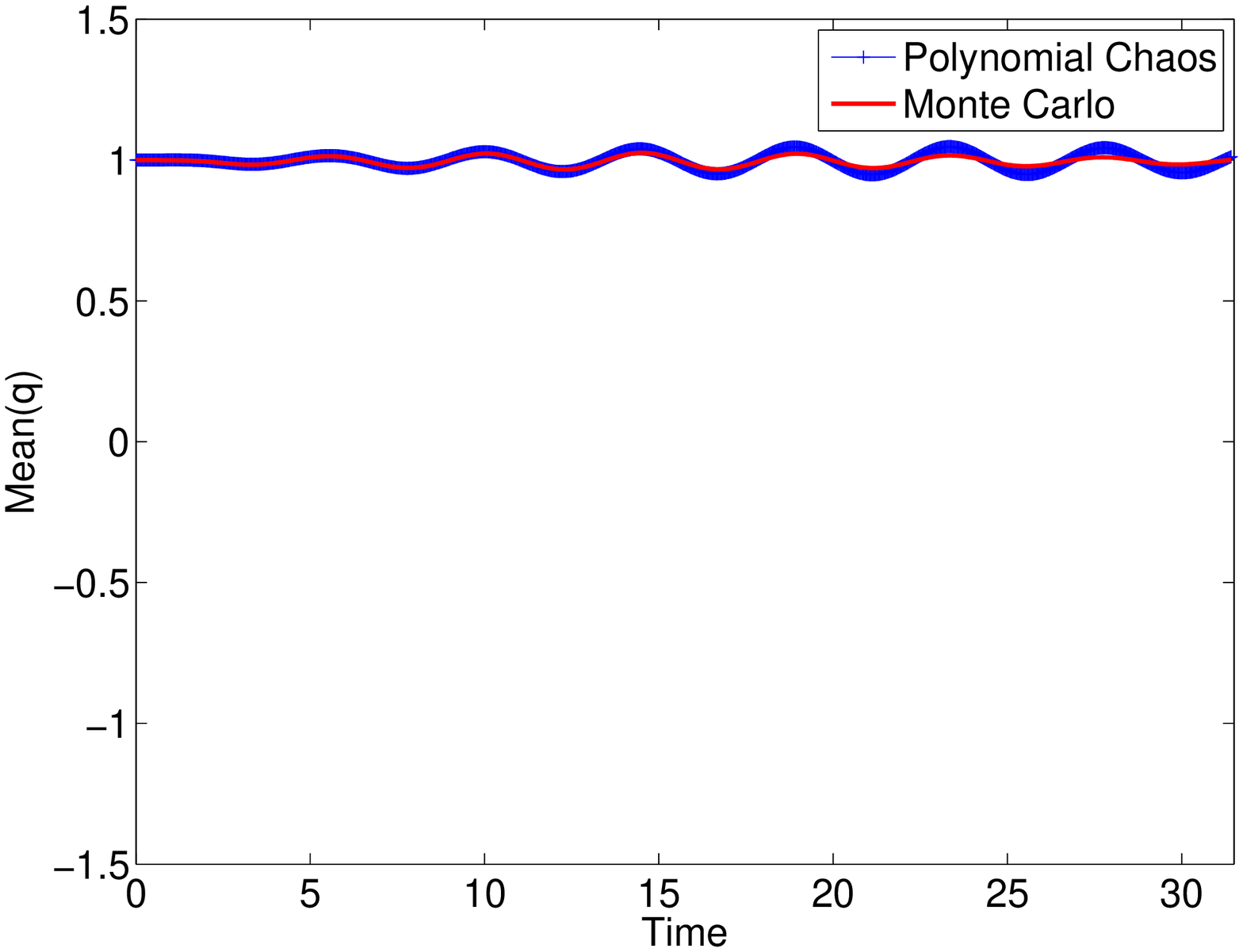}
  \caption{Comparison of Monte Carlo with polynomial chaos for the mean of $q$ as a function of time in the undamped, unforced Duffing oscillator with initial condition $(q,p)=(1,0)$. Polynomial chaos (expansion to $r=1$) is able to accurately track the first moment of the output distribution.\label{Fig:Duffing_simple_tracking}}
\end{figure}

\begin{figure}
  \centering
  \includegraphics[scale=0.4]{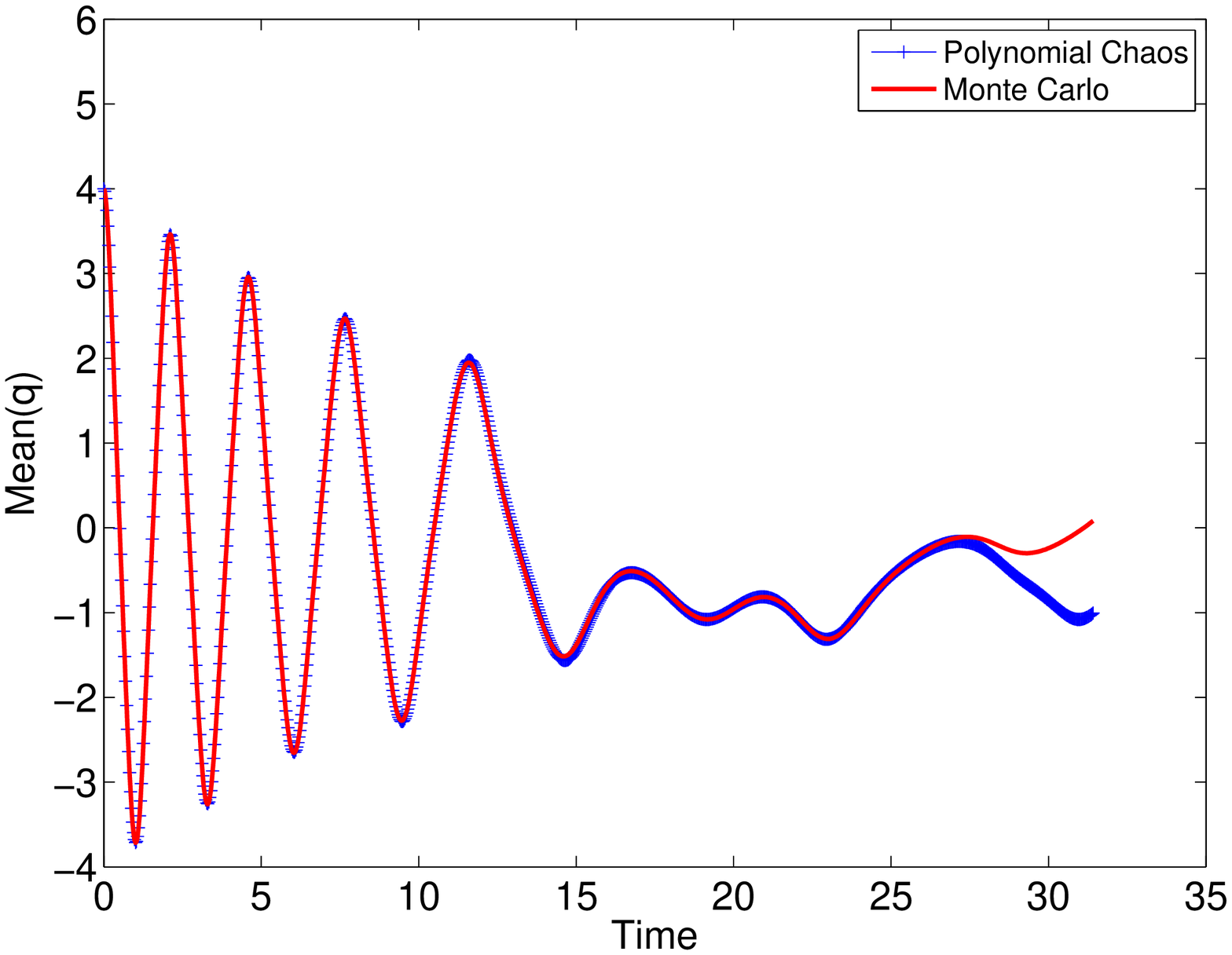}
  \caption{Comparison of Monte Carlo with polynomial chaos for the mean of $q$ as a function of time in the Duffing oscillator with initial condition $(q,p)=(4,0)$. After $\approx t=25$s, polynomial chaos (expansion to $r=1$) is unable to accurately track the first moment of the output distribution.\label{Fig:Duffing_tracking_IC4}}
\end{figure}

\begin{figure}
  \centering
  \includegraphics[scale=0.4]{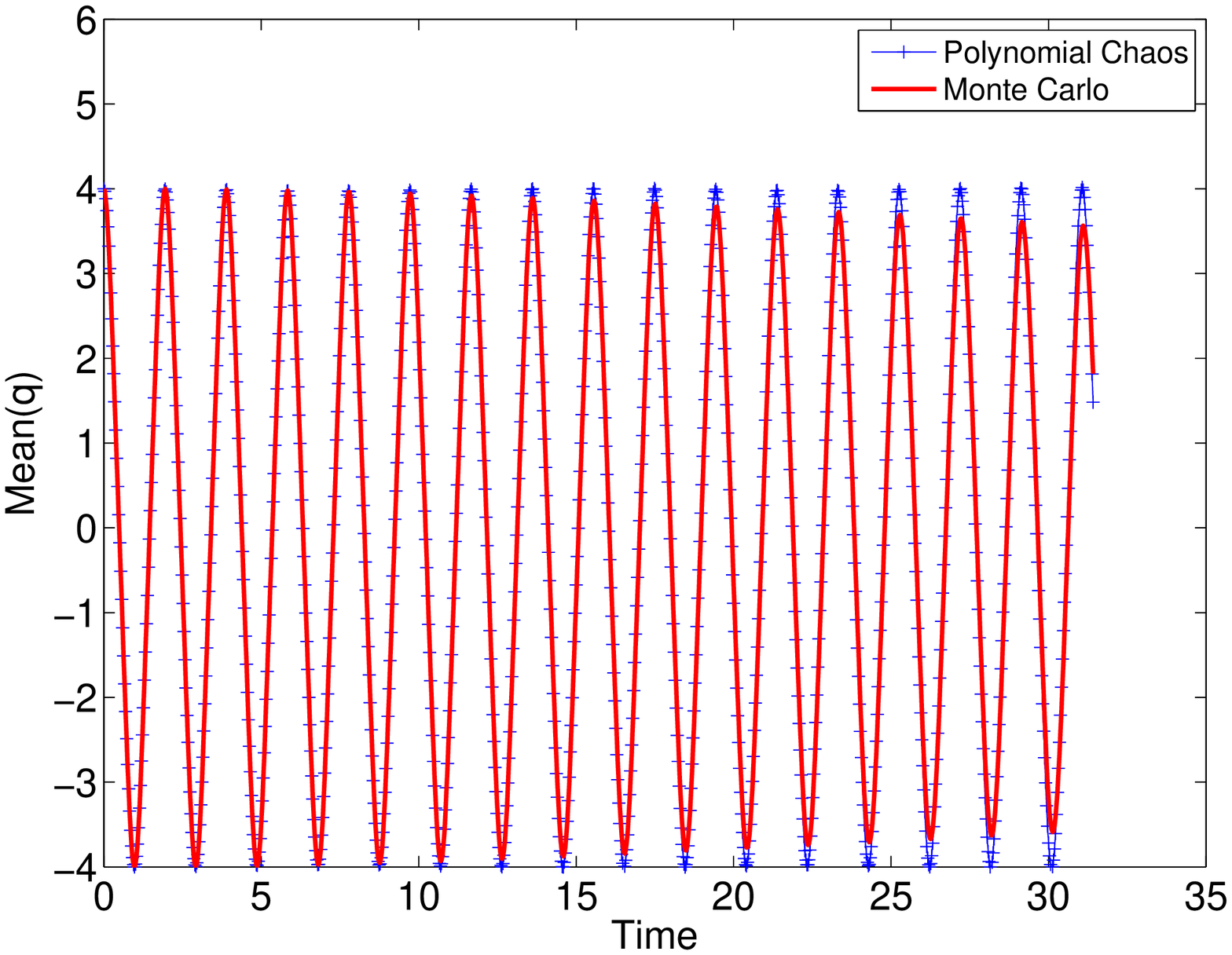}
  \caption{Comparison of Monte Carlo with polynomial chaos for the mean of $q$ as a function of time in the undamped, unforced Duffing oscillator with initial condition $(q,p)=(4,0)$. Polynomial chaos (expansion to $r=1$) is able to accurately track the first moment of the output distribution.\label{Fig:Duffing_simple_tracking_IC4}}
\end{figure}

When propagating uncertainty through chaotic systems with uncertain initial conditions, polynomial chaos again will need to be used carefully. Assume that $\lambda$ is not uncertain anymore, but instead the \emph{initial conditions} are normally distributed as $(q,p) = (1,0) + (\sigma\eta,0)$, where $\eta$ is a Gaussian variable with zero mean and unit variance.  The first order expansion yields the following system:
\begin{equation}
\begin{pmatrix}
\dot Q_{0}\\
\dot P_{0}\\
\dot Q_{1}\\
\dot P_{1}
\end{pmatrix}=\begin{pmatrix}
P_{0}\\
- \delta P_{0} - \lambda Q_{0} - (Q_{0}^3 + 3Q_{0}Q_{1}^2) + \gamma \cos \omega t\\
P_{1}\\
- \delta P_{1} - \lambda Q_{1} - 3(Q_{1}^3 + Q_{0}^2Q_{1}).
\end{pmatrix},
\label{eq:ic}
\end{equation}
Note that the uncertainty in initial conditions does not appear explicitly in these equations, but rather enters through the initial condition $Q_1(0) = \sigma$. For the purpose of our simulations we take $\sigma=0.1$. The resulting Poincar\'e sections are shown in Fig.~\ref{Fig:Coeff_uncertain}. The Lyapunov exponent is numerically found to be $\approx 0.85$, suggesting the persistence of chaos in the resulting polynomial chaos equations. This implies that any long term simulation that aims to track the output distribution will also suffer from problems of round-off in the initial conditions (given that the distribution on the initial condition will require computation of the initial conditions of the coefficients).

\begin{figure}
  \centering
  \subfigure[]{\includegraphics[scale=0.4]{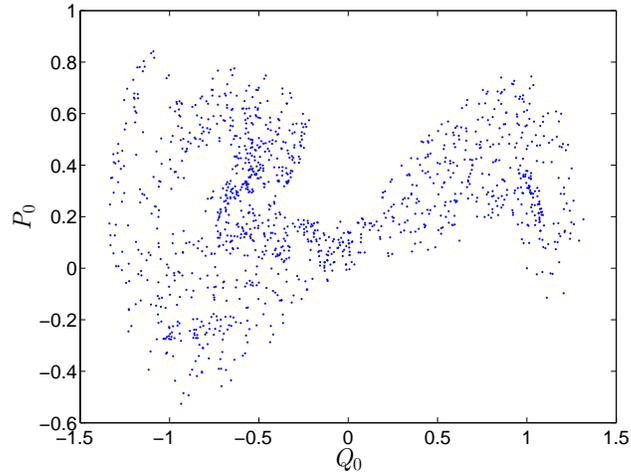}\label{Fig:Coeff0_uncertain}}
  \subfigure[]{\includegraphics[scale=0.4]{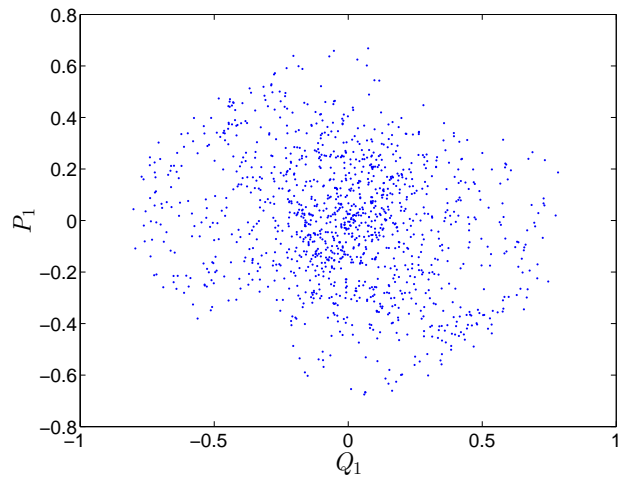}\label{Fig:Coeff1_uncertain}}
  \caption{a) Poincar\'e section at $\phi = 0$ of the dynamical system with uncertain initial conditions for the $0$-th order coefficients in the polynomial chaos expansion.  b) Poincar\'e section at $\phi = 0$ of the dynamical system with uncertain initial conditions for the $1$-st order coefficients in the polynomial chaos expansion.\label{Fig:Coeff_uncertain}}
 \end{figure}

\section{Conclusions}
Polynomial chaos is slowly becoming an established and popular approach for propagating uncertainty through smooth systems. Every year researchers use the approach to propagate uncertainty through a wide variety of engineering~\cite{Allen2009,Ghanem1998,Najm2009, Cit:Materials} and biological systems~\cite{Cit:polybio}. A systematic study on the properties and applicability of polynomial chaos to systems based on their structure and dynamics appears to be lacking.

In this work, we presented three main results. In the first part, we proved that when polynomial chaos is applied to Hamiltonian systems, the resulting equations are also Hamiltonian, even when the expansion is truncated. This is important, as it implies that structure in Hamiltonian systems is not only inherited by the new equations but also require the use of structure-preserving integrators~\cite{Cit:geom} to accurately propagate uncertainty. We also used the volume-preserving property of Hamiltonian systems to show that, on a particular example, a finite expansion must fail at long times, regardless of the order of the expansion. In the second part, we show that polynomial chaos may be applied to the averaged equations of a forced two-time system, allowing much faster uncertainty propagation than polynomial chaos on the original system. As the time scale separation increases, both the computational advantage as well as the quality of the approximation improves. In the third part, we demonstrate that polynomial chaos also inherits chaotic dynamics from underlying systems. The presence of chaos is shown to negatively influence the applicability of polynomial chaos. It reduces the length of time that polynomial chaos accurately tracks the output distributions and complicates computations when there is uncertainty in initial conditions.

\section*{Acknowledgments}

The authors would like to thank Amit Surana, as well as the anonymous reviewers for constructive comments.

\bibliographystyle{unsrt}
\bibliography{Hamiltonian}

\end{document}